\title{Local Rules for Computable Planar Tilings}
\author{Thomas Fernique
\institute{LIPN\\CNRS\& Université Paris 13, Sorbonne Paris Cité}
\email{thomas.fernique@lipn.univ-paris13.fr}
\and
Mathieu Sablik
\institute{LATP\\Université Aix-Marseille}
\email{mathieu.sablik@latp.univ-mrs.fr}
}
\newtheorem{theorem}{Theorem}
\newtheorem{proposition}{Proposition}
\newenvironment{proof}{\noindent \emph{Proof. }}{\hfill \hbox{\rlap{$\sqcap$}$\sqcup$}\\}
\begin{document}
\maketitle

\begin{abstract}
Aperiodic tilings are non-periodic tilings characterized by local constraints.
They play a key role in the proof of the undecidability of the domino problem (1964) and naturally model quasicrystals (discovered in 1982).
A central question is to cha\-rac\-te\-ri\-ze, among a class of non-periodic tilings, the aperiodic ones.
In this paper, we answer this question for the well-studied class of non-periodic tilings obtained by digitizing irrational vector spaces.
Namely, we prove that such tilings are aperiodic if and only if the digitized vector spaces are computable.
\end{abstract}

\section{Introduction}

A {\em tiling} is a covering of a given space by interior-disjoint compacts called {\em tiles}.
The shape of tiles yields constraints  on the way tiles can locally be arranged -- one speaks about {\em local rules} (think, {\em e.g.}, of the bumps and dents of a jigsaw puzzle).
Additional local rules can also be set by arbitrarily specifying how tiles can be neighboor (tiles can also be available in different colors in order to allow them playing different roles).\\

Tilings have been studied in computer sciences in the early 60's by the logician Hao Wang, who set the so-called {\em domino problem} (\cite{wang}): can one decide whether a given finite set of tiles can form a tiling of the plane (each tile can be used several times)?
His student Robert Berger proved the undecidability of this problem (\cite{berger}).
The two key ingredients of the proof are, first, the simulation of Turing computations by tilings of the plane and, second, the existence of {\em aperiodic} tile sets, that are finite tile sets which do tile the plane but only in a non-periodic fashion (Berger explicitly described the first ever such tile set).\\

The interest in aperiodic tilings ({\em i.e.}, tilings by aperiodic tile sets) spreaded beyond computer sciences two decades later, when new non-periodic crystals soon called {\em quasicrystals} were incidentally discovered by the chemist Dan Shechtman (\cite{shechtman}).
The connection with aperiodic tilings was indeed quickly done, with tiles and local rules res\-pec\-ti\-ve\-ly modelling atom clusters and finite range energetic interactions.
The issue that now concerns theoretical physicists is the classification of all the possible quasicrystalline structure, in the spirit of the Bravais-Fedorov classification of crystalline structures.\\

A promising approach is the one opened by Leonid Levitov in \cite{levitov}.
He considered non-periodic planar tilings, that are digitizations of irrational vector spaces, and searched algebraic conditions on vector space parameters for the existence of local rules.
This approach led to numerous results (\cite{burkov,le1,le2,le3,le4,socolar1,socolar2}), but no complete characterization of aperiodic planar tilings has yet been obtained.\\

The aim of this paper is to move a step forward in the above approach by enriching geometric methods with calculability, in the spirit of the first works on aperiodic tile sets.
Our main result (Theorem~\ref{th:main}, below) is that a planar tiling admits local rules if and only if it is a digitization of a vector space whose parameters are computable.\\

Our result thus provides a complete characterization -- at least if we do not care about parameters as the size of tile sets or the precision of digitizations.
Of course, these parameters are important {\em w.r.t.} the quasicrystal modelization.
In particular, the tile sets obtained via the Levitov approach are much smaller than the huge ones that we get by simulating computations of Turing machines (the planar tilings obtained in the former case are however restricted to algebraic parameters, versus computable parameters in the latter case).
The quest for sharper classifications of aperiodic planar tilings thus remains open, with our result nevertheless limiting the horizon of any such classification.\\

The rest of the paper is organized as follows.
In Section~\ref{sec:formalization}, we introduce the formalism which allows us to state us our main result, Theorem~\ref{th:main}.
Section~\ref{sec:reciproque} shows that one cannot expect to go beyond computability with local rules.
The following sections are devoted to prove that local rules allow to indeed reach the computability barrier.
Specifically, Section~\ref{sec:quasisturmian} introduces quasi-Sturmian words, which are particular non-periodic words whose letters are indexed by $\mathbb{Z}$.
Local rules cannot characterize such words, but Section~\ref{sec:subshift} shows that they can characterize the set of two-dimensional words (letters indexed by $\mathbb{Z}^2$) whose lines are quasi-Sturmian words.
The key ingredient is a result simultaneously obtained by \cite{AS} and \cite{DRS}.
We then show in Section~\ref{sec:3_vers_2} how to transfer this result onto planar tilings which are digitization of planes in $\mathbb{R}^3$, and we finally extend this to any planar tiling in Sec.~\ref{sec:n_vers_d}.

\section{Formalization}
\label{sec:formalization}
\paragraph{$\mathbf{n\to d}$ \textbf{tilings}\\}
Let $\vec{v}_1,\ldots,\vec{v}_n$ be pairwise non-colinear vectors of $\mathbb{R}^d$, $n>d>0$.
A {\em $n\to d$ tile} is, up to translation, a parallelotope generated by $d$ of the $\vec{v}_i$'s, {\em i.e.}, the linear combinations with coefficient in $[0,1]$ of $d$ of the $\vec{v}_i$'s.
Then, a {\em $n\to d$ tiling} is a face-to-face tiling of $\mathbb{R}^d$ by $n\to d$ tiles, {\em i.e.}, a covering of $\mathbb{R}^d$ by $n\to d$ tiles which can intersect only on full faces of dimension less than $d$.

\paragraph{\textbf{Lift}\\}
Let $\vec{e}_1,\ldots,\vec{e}_n$ be the canonical basis of $\mathbb{R}^n$.
Given a $n\to d$ tiling, we first arbitrarily map one of its vertex to $\vec{0}\in\mathbb{R}^n$, then we map each tile generated by $\vec{v}_{i_1},\ldots,\vec{v}_{i_d}$ onto the $d$-dimensional face of a unit hypercube of $\mathbb{Z}^n$ generated by $\vec{e}_{i_1},\ldots,\vec{e}_{i_d}$, with two tiles adjacent along an edge $\vec{v}_i$ being mapped onto two faces adjacent along an edge $\vec{e}_i$.
This defines, up to the choice of the initial vertex, the {\em lift} of the tiling. This is a digital $d$-dimensional manifold in $\mathbb{R}^n$, whence $d$ and $n-d$ are respectively called the dimension and the codimension of the tiling.

\paragraph{\textbf{Computable planar tilings}\\}
A $n\to d$ tiling is said to be {\em planar} if there are a $d$-dimensional vector subspace $V\subset \mathbb{R}^n$ and a positive integer $w$ such that this tiling can be lifted into the slice $V+[0,w)^n$.
The space $V$ is called the {\em slope} of the tiling and the smallest suitable $w$ its {\em width} (both are uniquely defined).
A planar tiling is said to be computable if its slope is computable, {\em i.e.}, admits a basis of vectors with computable coordinates that is to say they can be  computed to within any desired precision by a Turing machine.

\paragraph{\textbf{Local rules}\\}
A planar $n\to d$ tiling of slope $V$ is said to admit {\em local rules} if, when $n\to d$ tiles are available in colors chosen among a given finite set, the way two colored tiles can intersect can be thoroughly specified so that the allowed colored tilings form a non-empty set of colored planar $n\to d$ tilings of slope $V$ and uniformly bounded width $w$.
In other words, the slope $V$ is characterized by finitely many rules governing the way tiles locally match, with colors allowing each tile to play different roles.
In \cite{levitov}, local rules are said to be {\em strong} if $w=1$, {\em weak} otherwise.\\

\noindent We are now in a position to state our main result:
\begin{theorem}\label{th:main}
A planar tiling admits local rules if and only if it is computable.
\end{theorem}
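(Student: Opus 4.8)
The plan is to prove the two implications separately, following the route announced in the introduction. For the direction that local rules force computability (Section~\ref{sec:reciproque}), I would start from the observation that a tiling admitting local rules generates a \emph{nonempty} set $X$ of colored planar tilings, all of the same slope $V$ and width at most $w$, cut out by finitely many local matching constraints. Finiteness of the constraints lets one effectively explore the admissible patches: a patch extends to a genuine tiling of $X$ as soon as it extends to arbitrarily large locally admissible patches, by a compactness (König) argument. Since every tiling of $X$ lies in the common slab $V+[0,w)^n$, any large admissible patch is geometrically almost flat, and the $d$-plane best fitting its lift converges to $V$ with an error controlled by $w$ and the patch diameter. Computing a basis of $V$ to precision $\varepsilon$ then amounts to searching for an admissible patch large enough to pin the fitting plane within $\varepsilon$, so $V$ is computable.

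The converse --- computability implies local rules --- is the substantial direction, and I would climb the ladder of cases from the introduction. A single quasi-Sturmian word cannot be characterized by local rules, so the right object lives one dimension higher: the set of two-dimensional words whose rows are quasi-Sturmian of a fixed slope. Here the computability of the slope is exactly the hypothesis that turns the one-dimensional row condition into an \emph{effective} subshift, and the simulation theorem of \cite{AS,DRS} --- every effective one-dimensional subshift is the letter-to-letter projection of a two-dimensional subshift of finite type --- produces local rules for this two-dimensional symbolic object (Section~\ref{sec:subshift}). The extra symbols carrying the simulated computation are precisely what shows up geometrically as the width $w$ of the weak local rules.

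It then remains to transfer this symbolic result to genuine tilings. I would first handle planar tilings digitizing a plane $V\subset\mathbb{R}^3$, the codimension-one $3\to2$ case: the lift is a stepped surface whose rows along the coordinate directions read as quasi-Sturmian words encoding $V$, so the two-dimensional word result of the previous step equips such tilings with local rules. Finally I would bootstrap from $3\to2$ to an arbitrary $n\to d$ planar tiling by reducing to codimension-one data along suitable coordinate subspaces and reassembling the finitely many local rules obtained for each, checking that they force the global slope $V$ and keep the width uniformly bounded.

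I expect the main obstacle to be the geometric transfer rather than the symbolic core. The delicate point is the dictionary between the digitization geometry --- faces of unit hypercubes, the slab $V+[0,w)^n$ --- and the combinatorics of the simulation theorem: one must both encode the computable coordinates of $V$ as a single effective constraint and, conversely, read an arbitrary admissible configuration of the resulting subshift of finite type back as an honest planar tiling of slope exactly $V$ and bounded width. The passage from $3\to2$ to general $n\to d$ compounds this, since several codimension-one constraints must interact without inflating $w$ or emptying the admissible set.
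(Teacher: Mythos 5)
Your plan follows the paper's own route step for step: the enumeration-plus-compactness argument for the computability direction, then quasi-Sturmian rows, the simulation theorem of \cite{AS,DRS}, the $3\to 2$ embedding, and a final induction on codimension and dimension. The computability direction as you sketch it is sound and matches the paper's proof; the paper's computable stopping criterion --- enumerate all locally admissible patterns of diameter $r$ and wait until the candidate slopes read off their lifts are pairwise within $\epsilon/2$, while $r>2w/\epsilon$ --- is exactly the missing precise form of your ``search until the fitting plane is pinned'', and its correctness rests on the same compactness argument you invoke.

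The one place where your plan, executed literally, would break is the symbolic core. The theorem of \cite{AS,DRS} applies to a one-dimensional effective subshift $X$ and yields soficity of the two-dimensional subshift whose configurations are constant in the vertical direction with rows in $X$. But the object you need --- and correctly identify --- is $Z'_\alpha$, whose rows are \emph{independently} quasi-Sturmian: in the $3\to 2$ embedding, distinct $\vec{v}_3$-ribbons of a planar tiling carry distinct words (Sturmian words with shifted intercepts, or bounded perturbations thereof), so a vertically constant subshift would exclude the very tilings the local rules must allow. Hence \cite{AS,DRS} cannot be invoked directly on $Z'_\alpha$, and it is also unclear that the set of quasi-Sturmian words even forms an effective one-dimensional subshift to feed into the theorem. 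The paper bridges this gap with a dedicated construction (Prop.~\ref{prop:sofic_z'}): it first applies \cite{AS,DRS} to the vertically constant Sturmian subshift $Z_\alpha$, then augments its finite-type cover with a third layer of bits whose changes along a row encode alternating replacements $0\to 1$ and $1\to 0$ (characterized in Prop.~\ref{prop:alternated_replacements}), and lets the factor map perform these replacements. This is what turns ``the same Sturmian word on every row'' into ``each row at distance at most one from a fixed Sturmian word'', and it is a genuinely necessary extra idea, not a routine application of the simulation theorem.
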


\section{The Computability barrier}
\label{sec:reciproque}

Consider $\mathbb{R}^n$ endowed with the norm $||\vec{v}||_{\infty}$ which gives the maximum of the absolute value of the coordinates of $\vec{v}$. Let $\mathbf{S}=\{\vec{x}\in\mathbb{R}^n : ||x||_{\infty}=1\}$, for two $d$-dimensional vector spaces $V$ and $W$ of $\mathbb{R}^n$, we can define the distance 
$$\widetilde{d}(V,W)=\max\left\{\sup_{\vec{v}\in\mathbf{S}\cap V}\inf_{\vec{w} \in W}||\vec{v}-\vec{w}||_{\infty}\ ;\ \sup_{\vec{w}\in\mathbf{S}\cap W}\inf_{\vec{v} \in V}||\vec{v}-\vec{w}||_{\infty}\right\}.$$
Since $V$ and $W$ have the same dimension, then the maximum in the expression above is always attained by both expressions simultaneously. The set of $d$-dimensional vector spaces is compact by this distance. Moreover if we know the computable basis of $V$ and $W$ then $\widetilde{d}(V,W)$ is also computable. With this formalism one obtains an equivalent definition of computable vector spaces: $V$ is computable if there exists a Turing machine such that on the input $n\in\mathbb{N}$ it gives a rational basis of a vector space $W_n$ such that $\widetilde{d}(V,W_n)\leq\frac{1}{n}$.\\

\noindent We here show the easiest part of Theorem~\ref{th:main}:

\begin{proposition}
If a planar tiling admits local rules, then it is computable.
\end{proposition}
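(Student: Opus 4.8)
The plan is to turn the finite data defining the local rules into the input of a Turing machine which, on input $n$, outputs a rational basis of a $d$-dimensional space $W_n$ with $\widetilde{d}(V,W_n)\leq\frac1n$; by the criterion recalled at the start of this section, this is exactly what it means for the slope $V$ to be computable. The local rules consist of a finite set of colors together with a finite list of permitted adjacencies, so the predicate ``a given finite colored patch satisfies the local rules'' is decidable, and for each radius $r$ the machine can enumerate every valid patch contained in a ball of radius $r$. To each valid patch $P$ I attach its lift, an explicit finite set of integer points of $\mathbb{R}^n$, and from it I extract a rational \emph{approximate slope} $U_P$: a $d$-dimensional rational space that nearly minimizes the width of a slice $U_P+[0,w)^n$ enclosing the lift. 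Since $w$ is part of the local-rules data and the lift is a finite point set, such a near-minimizer can be found to any desired precision by searching a fine rational grid of subspaces (widths being computable), and the distances $\widetilde{d}(U_P,U_Q)$ are then computable as well.

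The geometric heart of the argument is the following uniformity statement, which I would establish by compactness: for every $\varepsilon>0$ there is a radius $R(\varepsilon)$ such that every valid patch of radius at least $R(\varepsilon)$ has approximate slope within $\varepsilon$ of $V$, that is $\widetilde{d}(U_P,V)\leq\varepsilon$. Suppose not; then for arbitrarily large $r$ there is a valid patch $P_r$ whose lift fits in no slice $U+[0,w)^n$ with $\widetilde{d}(U,V)\leq\varepsilon$. Recentering and using that the space of tilings is compact while validity is a closed, purely local condition, a König-type argument lets the $P_r$ accumulate to a valid infinite tiling whose lift lies in no such slice, contradicting the hypothesis that the local rules allow only planar tilings of slope $V$ and width at most $w$. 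The delicate point, which I expect to be the main obstacle, is precisely this: one must choose a notion of approximate slope that is at once computable from a finite patch and continuous enough under this limiting process that the approximate slopes of the finite patches converge to the slope of the limit tiling.

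Granting the uniformity statement, the machine computes $W_n$ as follows. Because $w$ is known, there is a computable radius $r_0(n)$, depending only on $w$ and $n$, beyond which any radius-$r$ patch of a genuine planar tiling of slope $V$ and width $w$ has approximate slope within $\frac1{3n}$ of $V$ (a patch spanning radius $\sim r$ in the slope direction can be enclosed in a width-$w$ slice only by directions within $\sim w/r$ of $V$). The machine runs over radii $r=r_0(n), r_0(n)+1,\dots$, computes the approximate slopes of all valid patches of radius $r$, and stops at the first $r$ for which these are pairwise within $\frac1{3n}$ for $\widetilde{d}$, outputting any one of them as $W_n$. Termination holds because, taking $\varepsilon=\frac1{6n}$, every $r\geq\max(r_0(n),R(\varepsilon))$ forces all approximate slopes to lie within $\varepsilon$ of $V$ and hence pairwise within $2\varepsilon\leq\frac1{3n}$. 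Correctness holds because the set of valid tilings is nonempty: a genuine slope-$V$ tiling provides a valid radius-$r$ patch $P_0$ with $\widetilde{d}(U_{P_0},V)\leq\frac1{3n}$ (as $r\geq r_0(n)$), so the output, being within $\frac1{3n}$ of $U_{P_0}$, satisfies $\widetilde{d}(V,W_n)\leq\frac1{3n}+\frac1{3n}\leq\frac1n$, which completes the proof.
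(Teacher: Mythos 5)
Your proposal follows essentially the same route as the paper's proof: enumerate the finitely many patches of radius $r$ allowed by the local rules, attach to each a computable approximate slope, argue by compactness that for $r$ large enough all these slopes must cluster near $V$ (else a limit of bad patches would give an allowed tiling of a different slope), and have the machine stop at the first radius where all approximate slopes are pairwise close, correctness following by the triangle inequality from the patch supplied by the genuine slope-$V$ tiling. The only notable difference is the choice of approximate slope --- the paper spans $d$ lift vectors pointing to border vertices of the patch, while you take a near-minimizer of the enclosing slice width --- and the ``delicate point'' you flag about making the compactness limit compatible with that choice is present, and glossed over just as briefly, in the paper's own argument.
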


\begin{proof}
Consider a planar $n\to d$ tiling of slope $V$ and width $w$ which admits local rules and take $P_r$ the set of all the diameter $r$ patterns centred on $\vec{0}$ of colored tilings allowed by these local rules (this takes exponential but finite time in $r$). Let $\mathcal{X}_r$ be the set of $d$-dimensional vector spaces which admit a basis given by $d$ vectors  associated at a border vertex in the lift of a pattern of $P_r$. The set $\mathcal{X}_r$ is finite, moreover there exists $W\in \mathcal{X}_r$ such that $\widetilde{d}(W,V)\leq \frac{w}{r}$. Since for sufficiently large $r$ all vector spaces of $\mathcal{X}_r$ are near of $V$ (if not by compacity one obtains one other slope for the $n\to d$ tiling), to obtain an approximation of $V$ with an error bound $\epsilon$, we take the first $r>\frac{2w}{\epsilon}$ such that for all $W_1,W_2\in\mathcal{X}_r$ one has $\widetilde{d}(W_1,W_2)<\frac{\epsilon}{2}$. In this case, for all $W\in\mathcal{X}_r$ one has $\widetilde{d}(V,W)<\epsilon$ so all basis of norm one of $W$ are an approximation of a basis of V. Thus $V$ is a computable vector space. 
\end{proof}

\section{Quasi-Sturmian words}
\label{sec:quasisturmian}

Consider the set $\{0,1\}^{\mathbb{Z}}$ of bi-infinite words over the alphabet $\{0,1\}$ endowed with the metric $d$ defined, for any $u$ and $v$, by
$$
d(u,v):=\sup_{p\leq q}\left||u(p)u(p+1)\ldots u(q)|_0-|v(p)v(p+1)\ldots v(q)|_0\right|,
$$
where $|w|_0$ denote the number of occurences of the letter $0$ in the finite word $w$.
In other terms, the distance between two words is the maximum {\em balance} between their finite factors which begin and start at the same positions.\\

\noindent 
Define the Sturmian word $s_{\rho,\alpha}\in\{0,1\}^{\mathbb{Z}}$ of slope $\alpha\in[0,1]$ and intercept $\rho$ by
$$
s_{\rho,\alpha}(n)=0
~\Leftrightarrow~
(\rho+n\alpha)\mod 1\in[0,1-\alpha).
$$
Sturmian words have been extensively studied (see, {\em e.g.}, \cite{lothaire} for a detailed account).
In particular, they can be seen as $2\to 1$ planar tilings of width $1$ (the digitized vector space is here a line of slope $\alpha$, and the tiles are letters $0$ and $1$).
Classic properties of Sturmian words easily yield:

\begin{proposition}\label{prop:parallel_sturmian}
Sturmian words with equal slopes are at distance at most one.
\end{proposition}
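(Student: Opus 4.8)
The plan is to reduce the supremum over all factors to a single bounded quantity by exploiting a telescoping identity. First I would rewrite each Sturmian letter as a difference of floor functions: from the definition, $s_{\rho,\alpha}(n)=1$ exactly when $(\rho+n\alpha)\bmod 1\in[1-\alpha,1)$, which is precisely the condition that adding $\alpha$ to $\rho+n\alpha$ crosses an integer; hence $s_{\rho,\alpha}(n)=\lfloor\rho+(n+1)\alpha\rfloor-\lfloor\rho+n\alpha\rfloor$. The only care needed here is to check that the half-open interval convention $[1-\alpha,1)$ matches the floor behaviour at the boundary, which it does.

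With this representation the count of $1$s over positions $p,\dots,q$ telescopes, giving $|s_{\rho,\alpha}(p)\cdots s_{\rho,\alpha}(q)|_1=\lfloor\rho+(q+1)\alpha\rfloor-\lfloor\rho+p\alpha\rfloor$. Since the number of $0$s in a factor equals its length $(q-p+1)$ minus the number of $1$s, and both words restricted to $[p,q]$ share the same length, the balance of $0$-counts equals, up to sign, the balance of $1$-counts. Thus for $u=s_{\rho,\alpha}$ and $v=s_{\rho',\alpha}$ I would obtain $|u(p)\cdots u(q)|_0-|v(p)\cdots v(q)|_0=\pm\bigl(f((q+1)\alpha)-f(p\alpha)\bigr)$, where I set $f(x):=\lfloor\rho+x\rfloor-\lfloor\rho'+x\rfloor$.

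The heart of the argument is then the elementary observation that $f$ takes only two consecutive integer values. Writing $\delta=\rho'-\rho$ and $m=\lfloor\delta\rfloor$, one has $\lfloor a+\delta\rfloor=m+\lfloor a+\{\delta\}\rfloor$ for all $a$, and since $\{\delta\}\in[0,1)$ the quantity $\lfloor a\rfloor-\lfloor a+\{\delta\}\rfloor$ lies in $\{0,-1\}$; hence $f$ takes values only in $\{-m-1,-m\}$. Consequently any two values of $f$ differ by at most $1$ in absolute value, so $|f((q+1)\alpha)-f(p\alpha)|\le 1$ for every $p\le q$. Taking the supremum over all $p\le q$ then yields $d(u,v)\le 1$, as claimed.

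The main obstacle lies not in any delicate estimate but in spotting the right reduction: a naive attempt to control the balance factor by factor confronts an a priori unbounded supremum, and the crux is recognizing that telescoping collapses each factor count into the difference of one bounded function $f$ evaluated at two points. Once the floor representation is in place the remainder is routine, the only points demanding mild vigilance being the interval conventions used to derive the floor formula and the sign bookkeeping when passing between $0$-counts and $1$-counts.
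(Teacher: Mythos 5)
Your proof is correct, and it takes a genuinely different route from the paper's. The paper argues abstractly: it invokes two classical facts about Sturmian words, namely that equal slope implies equal sets of finite factors, and that Sturmian words have balance at most one, so the position-wise comparison of $u$ and $v$ reduces to comparing two factors of $u$ of the same length. You instead compute directly from the rotation definition: the representation $s_{\rho,\alpha}(n)=\lfloor\rho+(n+1)\alpha\rfloor-\lfloor\rho+n\alpha\rfloor$ is valid (the half-open interval $[1-\alpha,1)$ does match the floor convention, including the boundary case $\{\rho+n\alpha\}=1-\alpha$), the telescoping identity for the $1$-count is exact, and your observation that $f(x)=\lfloor\rho+x\rfloor-\lfloor\rho'+x\rfloor$ takes only two consecutive integer values is the right elementary substitute for the balance property. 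What your approach buys is self-containment: it needs no citation to the Sturmian literature and gives an explicit, quantitative reason for the bound. What the paper's approach buys is brevity and a conceptual link to the balance-one characterization of Sturmian words (which the paper uses again implicitly when remarking that quasi-Sturmian words are exactly the words at distance one from Sturmian ones); your computation, by contrast, is tied to the explicit mechanical definition of $s_{\rho,\alpha}$ and would not transfer to an abstract characterization of Sturmian words by factor sets. One cosmetic remark: the sign bookkeeping when passing from $1$-counts to $0$-counts, which you handle with the factor-length argument, could be stated even more directly since the two factors occupy the same positions $p,\dots,q$ and hence automatically have the same length.
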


\begin{proof}
Two sturmian words $u$ and $v$ with equal slopes are known to have the same finite factors.
Any two factors of respectively $u$ and $v$ which begin and start at the same positions are thus also factors of $u$ only - at different position but with the same number of letters.
This yields the bound
$$
d(u,v)\leq\sup_{p,q,r}\left||u(p)u(p+1)\ldots u(p+r)|_0-|u(q)u(q+1)\ldots u(q+r)|_0\right|.
$$
This bound is known to be at most one for Sturmian words (and only them).
\end{proof}

The words at distance at most one from a Sturmian word of slope $\alpha$ are however not all Sturmian.
We call them {\em quasi-Sturmian} (of slope $\alpha$).
They can be seen as $2\to 1$ planar tilings of width $2$.
The easy following proposition will be useful to link Sturmian and quasi-Sturmian words:

\begin{proposition}\label{prop:alternated_replacements}
Two words in $\{0,1\}^{\mathbb{Z}}$ are at distance at most one if and only if each can be obtained from the other by performing letter replacements $0\to 1$ or $1\to 0$, without two consecutive replacements of the same type.
\end{proposition}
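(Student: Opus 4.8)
The plan is to encode the difference between $u$ and $v$ as a signed sequence supported on the positions where they disagree, and to recognise that the metric $d$ measures exactly the largest partial sum of this sequence over a window. First I would set $D=\{i\in\mathbb{Z}:u(i)\neq v(i)\}$ and assign to each $i\in D$ a sign $\sigma(i)=+1$ when $u(i)=0,v(i)=1$ (a replacement $0\to1$) and $\sigma(i)=-1$ when $u(i)=1,v(i)=0$ (a replacement $1\to0$). A position outside $D$ contributes equally to the $0$-count of $u$ and of $v$, while a position in $D$ contributes exactly $\sigma(i)$ to their difference, so that
$$|u(p)\ldots u(q)|_0-|v(p)\ldots v(q)|_0=\sum_{i\in D\cap[p,q]}\sigma(i),$$
whence $d(u,v)=\sup_{p\leq q}\bigl|\sum_{i\in D\cap[p,q]}\sigma(i)\bigr|$. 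The whole statement then reduces to: this supremum is at most $1$ if and only if the values $\sigma(i)$, read in increasing order of $i\in D$, alternate in sign, which is precisely the requirement that consecutive replacements be of opposite types.

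For the backward implication I would assume alternation and note that for any window $[p,q]$ the set $D\cap[p,q]$ is a block of elements consecutive in $D$, so the associated sum is an alternating $\pm1$ sum; such a sum telescopes to $0$ when it has evenly many terms and to $\pm1$ when it has oddly many terms, hence lies in $\{-1,0,1\}$ and $d(u,v)\leq1$. For the forward implication I would argue by contraposition: if alternation fails, there are positions $d_k<d_{k+1}$ consecutive in $D$ (with no element of $D$ strictly between them) and $\sigma(d_k)=\sigma(d_{k+1})$; the single window $[d_k,d_{k+1}]$ then satisfies $D\cap[d_k,d_{k+1}]=\{d_k,d_{k+1}\}$, so its sum is $\pm2$ and $d(u,v)\geq2>1$.

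The argument is essentially an encoding, so the points that need care, and where I expect the real work to sit, are the bookkeeping ones. I must check that the signed-count identity correctly discards positions outside $D$, and handle the degenerate cases: $D$ empty (the words coincide, distance $0$, alternation vacuous) and $D$ finite or only one-sidedly infinite, so that ``consecutive in $D$'' is always meaningful and the sup is taken over genuinely finite windows. The crux of the hard direction is that the witnessing window may be taken to be exactly $[d_k,d_{k+1}]$: it isolates a bad pair precisely because $d_k$ and $d_{k+1}$ are \emph{adjacent} in $D$, and it is this adjacency, rather than the mere existence of two equal signs somewhere, that forces the sum to reach $\pm2$. I would finish by observing that the condition is symmetric in $u$ and $v$, since exchanging them negates every $\sigma(i)$ and hence preserves alternation, which justifies the phrasing ``each can be obtained from the other.''
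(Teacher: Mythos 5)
Your proposal is correct and follows essentially the same route as the paper's proof: the signed-sum identity is just a formalization of the paper's ``balance,'' the alternating-sum telescoping gives the backward direction exactly as in the paper, and the window $[d_k,d_{k+1}]$ isolating two adjacent same-type replacements is precisely the paper's argument that such a pair forces distance at least two. Your version is somewhat more careful about bookkeeping (the identity, degenerate cases, symmetry), but no new idea is involved.
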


\begin{proof}
Let $u$ and $v$ in $\{0,1\}^{\mathbb{Z}}$ at distance at most one.
Performing on $u$ replacements at each position $i$ where $u(i)\neq v(i)$ yields $v$.
If two consecutive replacements, say at position $p$ and $q$, have the same type, then the balance between $u(p)\ldots u(q)$ and $v(p)\ldots v(q)$ is two, hence $d(u,v)\geq 2$.
The type of replacements thus necessarily alternates.\\
Conversely, assume that $v\in\{0,1\}^{\mathbb{Z}}$ is obtained from $u\in\{0,1\}^{\mathbb{Z}}$ by per\-for\-ming replacements whose type alternates.
Given $p\leq q$, consider the number of replacements between positions $p$ and $q$: the balance between $u(p)\ldots u(q)$ and $v(p)\ldots v(q)$ is $0$ if this number is even, $1$ otherwise, hence $d(u,v)\leq 1$.
\end{proof}

Since the replacements to transform $u$ in $v$ alternate, their sequence can be encoded by $w\in\{0,1\}^{\mathbb{Z}}$: reading $01$ (resp. $10$) at position $i$ means that a replacement $0\to 1$ (resp $1\to 0$) occurs at position $i$.
Such a word $w$ is moreover unique, except if $u=v$ in which case both $w=0^{\mathbb{Z}}$ and $w=1^{\mathbb{Z}}$ suit.
This coding will be used in the proof of Prop.~\ref{prop:sofic_z'}.
Figures \ref{fig:close_words} and \ref{fig:close_sturmian_words} illustrate this.

\begin{figure}[hbtp]
\includegraphics[width=\textwidth]{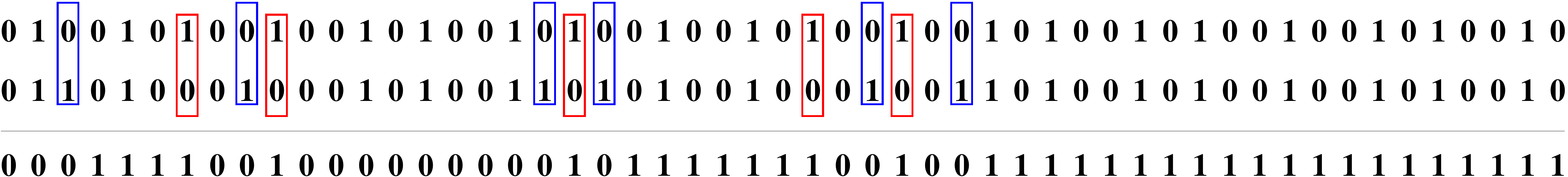}
\caption{Two (factors of) bi-infinite words with emphasized replacements (first two lines). The corresponding coding (last line).}
\label{fig:close_words}
\end{figure}

\begin{figure}[hbtp]
\includegraphics[width=\textwidth]{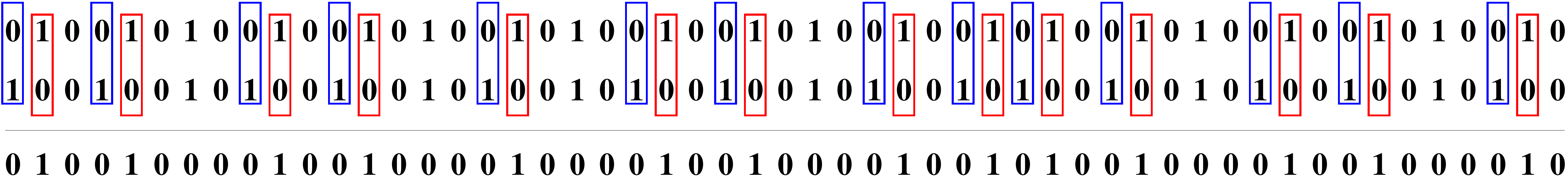}
\caption{Two (factors of) Sturmian words with the same slope and emphasized replacements (first two lines). The corresponding coding (last line).}
\label{fig:close_sturmian_words}
\end{figure}

\section{A Sofic subshift}
\label{sec:subshift}

The main result of \cite{AS,DRS} is phrased in terms of symbolic dynamics.
Let us first recall this formalism, which is convenient to keep, and then explain the correspondance with tilings. Given a finite alphabet $\mathcal{A}$, a {\em configuration} is a word indexed by $\mathbb{Z}^n$.
Consider a set of finite patterns $F$, the {\em subshift of forbidden patterns $F$} is the set of configurations where no pattern in $F$ appears.
A subshift $S$ is said to have {\em finite type} if there is a finite set of forbidden patterns.
A subshift $S$ is said to be {\em sofic} if there is a subshift of finite type $S'$ and a map from the alphabet of $S'$ to the alphabet of $S$, called {\em factor map}, which maps $S'$ onto $S$.
Last, a subshift is said to be {\em effective} if its forbidden patterns can be enumerated by a Turing machine.
In terms of tilings, a configuration can be seen as a tiling of $\mathbb{R}^n$ by tiles which are colored unit hypercubes.
Forbidden patterns then correspond to local rules, and the factor map simply correspond to a map on colors of tiles.
We are now in a position to use the result of \cite{AS,DRS}. It is (constructively) proven that any effective $d$-dimensional subshift can be obtained as the {\em projective subaction} of a $(d+1)$-dimensional sofic subshift, that is, the projection onto $d$ given coordinates.\\

First, let $Z_\alpha$ be the two-dimensional subshift whose configurations are obtained by copying on each row a given Sturmian word of slope $\alpha$:
$$
Z_{\alpha}=\left\{u\in\{0,1\}^{\mathbb{Z}^2},~\exists\rho\in\mathbb{R},~\forall m\in\mathbb{Z},~u(m,\cdotp)=s_{\alpha,\rho}\right\}.
$$
When $\alpha$ is computable, we compute the word $s_{\alpha,0}$ until to obtain $n+1$ distinct factors of size $n$, thus the finite patterns of $s_{\alpha,\rho}$ are recursively enumerated. It follows that the projective subaction of $Z_{\alpha}$ is effective, thus $Z_{\alpha}$ is sofic when $\alpha$ is computable by~\cite{AS,DRS}.
Say, {\em e.g.}, that $Z_{\alpha}$ is the projection onto the first coordinate of a finite type subshift $\tilde{Z}_\alpha$ of $(\{0,1\}\times B)$, for some finite alphabet $B$.\\

Then, extend $Z_\alpha$ by the subshift $Z'_\alpha$ whose elements have rows at distance at most one from the Sturmian word of slope $\alpha$ and intercept $0$:
$$
Z'_\alpha:=\left\{u\in\{0,1\}^{\mathbb{Z}^2},~\forall m\in\mathbb{Z},~d(u(m,\cdotp),s_{\alpha,0})\leq 1\right\}.
$$
Prop.~\ref{prop:parallel_sturmian} indeed yields $Z_\alpha\subset Z'_\alpha$ and that the choice of the intercept - here $0$ - has no importance.
Let us constructively prove:

\begin{proposition}\label{prop:sofic_z'}
The subshift $Z'_\alpha$ is sofic when $\alpha$ is computable.
\end{proposition}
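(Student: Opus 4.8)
The plan is to realize $Z'_\alpha$ as a letter-to-letter factor of a two-dimensional subshift of finite type, reusing the already-established soficity of $Z_\alpha$. Recall that $Z_\alpha$ is the first-coordinate factor of a finite-type subshift $\tilde{Z}_\alpha\subseteq(\{0,1\}\times B)^{\mathbb{Z}^2}$; since every configuration of $Z_\alpha$ has all its rows equal to a common Sturmian word $s_{\alpha,\rho}$, the $\{0,1\}$-component of any configuration of $\tilde{Z}_\alpha$ furnishes such a common \emph{reference} word, shared by all rows and with $\rho$ ranging over all reals. I would add on top of $\tilde{Z}_\alpha$ a second layer $w\in\{0,1\}^{\mathbb{Z}^2}$ meant to carry, row by row, the coding word of the alternating replacements introduced just before Prop.~\ref{prop:sofic_z'}: a rising edge $01$ (resp.\ falling edge $10$) of $w$ at a position of a row marks a replacement $0\to1$ (resp.\ $1\to0$) at that position.

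The resulting subshift $\tilde{Z}'_\alpha\subseteq\big((\{0,1\}\times B)\times\{0,1\}\big)^{\mathbb{Z}^2}$ would be defined by three families of constraints: the $(\{0,1\}\times B)$-layer obeys the finite-type rules of $\tilde{Z}_\alpha$; there is \emph{no} vertical constraint on $w$, so each row may carry its own coding word; and a local consistency rule in the horizontal direction requires that a rising (resp.\ falling) edge of $w$ occur only where the reference letter is $0$ (resp.\ $1$), so that the marked replacements are genuine and their type alternates. All three are finite-window conditions, hence $\tilde{Z}'_\alpha$ is of finite type. I would then define the factor map sending a cell to its reference letter, flipped exactly when $w$ has an edge there; reading the value $w(m,n+1)$ of the next cell can be absorbed into the alphabet by a standard recoding, so this is a genuine letter-to-letter map.

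It then remains to check that the image of this map is exactly $Z'_\alpha$. For the inclusion into $Z'_\alpha$: on each row the map applies to the reference word $s_{\alpha,\rho}$ a sequence of replacements whose type alternates, so by the converse direction of Prop.~\ref{prop:alternated_replacements} the output row lies at distance at most one from $s_{\alpha,\rho}$; as the reference $\rho$ is common to all rows, the output configuration belongs to $Z'_\alpha$. For the reverse inclusion, I would start from $u\in Z'_\alpha$ with common reference $s_{\alpha,\rho}$, choose a configuration of $\tilde{Z}_\alpha$ with that reference (possible since $\tilde{Z}_\alpha$ surjects onto $Z_\alpha$), and for each row use the direct direction of Prop.~\ref{prop:alternated_replacements} to produce the coding word realizing the replacements from $s_{\alpha,\rho}$ to that row; the resulting pair lies in $\tilde{Z}'_\alpha$ and maps to $u$.

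The main obstacle I anticipate is this last identification, and specifically the role of the intercept. The constraints only force a \emph{common} reference Sturmian word across all rows, with free intercept, and one must make sure this coincides with the intended $Z'_\alpha$. This is exactly the content of the remark, justified by Prop.~\ref{prop:parallel_sturmian}, that the choice of intercept is immaterial: the common-reference set is shift-invariant, so it is the genuine subshift the definition refers to, and any single intercept serves only as a representative. Verifying that the edge-coding consistency rules make the per-row replacements genuine and type-alternating, so that Prop.~\ref{prop:alternated_replacements} indeed applies in both directions, is the remaining point requiring care; everything else is routine finite-type bookkeeping.
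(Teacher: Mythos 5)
Your proposal is correct and follows essentially the same route as the paper: you superimpose on $\tilde{Z}_\alpha$ an extra $\{0,1\}$ layer whose horizontal edges mark replacements, constrain rising (resp.\ falling) edges to sit on reference letter $0$ (resp.\ $1$), and use the flip-at-edges factor map together with both directions of Prop.~\ref{prop:alternated_replacements}, exactly as in the paper's construction of $\tilde{Z}'_\alpha$ and $\pi$. Your explicit handling of the two minor points the paper glosses over --- recoding $\pi$ into a genuine letter-to-letter map, and the role of the free intercept in identifying the image with $Z'_\alpha$ --- is sound and, if anything, slightly more careful than the original.
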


\begin{proof}
Let $\pi_{i_1,\ldots,i_k}$ denotes the projection on the $i_1$-th,\ldots,$i_k$-th coordinates.
Let $\tilde{Z}'_\alpha$ be the subshift of $(\{0,1\}\times B\times \{0,1\})^{\mathbb{Z}^2}$ such that $u\in\tilde{Z}'_\alpha$ if and only if $\pi_{12}(u)\in \tilde{Z}_\alpha$ and, for any $(m,n)\in\mathbb{Z}^2$:
$$
\pi_3(u(m,n))<\pi_3(u(m,n+1)) ~\Rightarrow~ \pi_1(u(m,n))=0,
$$
$$
\pi_3(u(m,n))>\pi_3(u(m,n+1)) ~\Rightarrow~ \pi_1(u(m,n))=1.
$$
Clearly, $\tilde{Z}'_\alpha$ has finite type when so does $\tilde{Z}_\alpha$.
Now, we claim that the factor map $\pi$ defined as follows maps $\tilde{Z}'_\alpha$ onto $Z'_\alpha$:
$$
\pi(u)(m,n)=\left\{\begin{array}{cl}
\pi_1(u(m,n)) & \textrm{if }\pi_3(u(m,n))=\pi_3(u(m,n+1)),\\
1-\pi_1(u(m,n)) & \textrm{otherwise.}
\end{array}\right.
$$
Let $\tilde{u}\in\tilde{Z}'_\alpha$ and fix $m\in\mathbb{Z}$.
By definition of $\tilde{Z}'_\alpha$ and $\tilde{Z}_\alpha$, $\pi_1(\tilde{u}(m,\cdotp))=s_{\alpha,\rho}$.
One thus also has $\pi(\tilde{u}(m,\cdotp))=s_{\alpha,\rho}$, except at each position $n$ such that the two bits $\pi_3(\tilde{u}(m,n))$ and $\pi_3(\tilde{u}(m,n+1))$ differ.
At these positions, $\pi(\tilde{u}(m,\cdotp))$ is obtained by performing on $s_{\alpha,\rho}$ a replacement of type $\pi_3(\tilde{u}(m,n))\to \pi_3(\tilde{u}(m,n+1))$.
The type of these replacements alternate - as the bit runs do - and Prop.~\ref{prop:alternated_replacements} yields $d(\pi(\tilde{u}(m,\cdotp)),s_{\alpha,\rho})\leq 1$.
This shows that $\pi(\tilde{u})$ is in $Z'_\alpha$.
Hence, $\pi(\tilde{Z}'_\alpha)\subset Z'_\alpha$.\\

Conversely, let $u\in Z'_\alpha$. Fix $m\in\mathbb{Z}$ and choose $\tilde{v}\in \tilde{Z}_\alpha$ such that $\pi_1(\tilde{v}(m,\cdotp))=s_{\alpha,0}$. By definition, $d(u(m,\cdotp),s_{\alpha,0})\leq 1$, so we can consider $w_m$  the coding of the replacements which transform $s_{\alpha,0}$ into $u(m,\cdotp)$ (see end of Section~\ref{sec:quasisturmian}).
Consider $\tilde{u}\in (\{0,1\}\times B\times \{0,1\})^{\mathbb{Z}^2}$ defined by $\tilde{u}(m,i)=(\tilde{v}(m,i),w_m(i))$.
The way $\pi$ has been defined yields $\tilde{u}\in\tilde{Z}'_\alpha$ and $\pi(\tilde{u})=u$.
Hence, $Z'_\alpha\subset \pi(\tilde{Z}'_\alpha)$.
\end{proof}

Let us mention that the two-dimensionality plays a fundamental role in the result obtained in \cite{AS,DRS}, hence in the soficity of $Z_\alpha$.
It is thus also fundamental in the proof of Prop.~\ref{prop:sofic_z'}, although lines seems to be there only independantly considered.
The interplay between the lines of $\tilde{Z}'_\alpha$ is indeed ``hidden'' in the alphabet $B$ and in the forbidden patterns of $\tilde{Z}_\alpha$.

\section{Dimension two and codimension one}
\label{sec:3_vers_2}

Consider a computable $3\to 2$ planar tiling defined over vectors $\vec{v}_1$, $\vec{v}_2$ and $\vec{v}_3$.
Its slope can be defined by its normal vector, say $(1,\alpha,\beta)$, which is computable.\\
As mentioned above, any sofic subshift can be seen as tilings by a given tile set and local rules.
Here, it is convenient to model local rules by coloring edges of tiles and assuming that two tiles can be adjacent only on edges which have the same color (this is actually the definition used in \cite{wang}).
Let thus $\tau_\alpha$ and $\tau_\beta$ be such tile sets, with moreover label $0$ or $1$ on tiles, such that the tilings by $\tau_\alpha$ and $\tau_\beta$ respectively correspond, when considering only labels, to $Z'_\alpha$ and $Z'_\beta$.\\

\noindent Let us derive from $\tau_\alpha$ a tile set $\tau'_\alpha$ as follows (see Fig.~\ref{fig:square_to_rhombus}):
\begin{itemize}
\item each tile in $\tau_\alpha$ with label $0$ is sheared along $\vec{v}_3$ to give in $\tau'_\alpha$ a rhombus tile defined by $\vec{v}_2$ and $\vec{v}_3$ (edge colors are unmodified);
\item each tile in $\tau_\alpha$ with label $1$ is sheared along $\vec{v}_3$ to give in $\tau'_\alpha$ a rhombus tile defined by $\vec{v}_1$ and $\vec{v}_3$ (edge colors are unmodified);
\item a rhombus tile defined by $\vec{v}_1$ and $\vec{v}_2$ is in $\tau'_\alpha$ if and only if its edges have colors which appears on $\vec{v}_1$- or $\vec{v}_2$-edges of square tiles of $\tau_\alpha$, with the restriction that two edges meeting at $\vec{v}_1$ or $\vec{v}_2$ must have the same color ({\em transfer} tile).
\end{itemize}

\begin{figure}[hbtp]
\centering
\includegraphics[width=\textwidth]{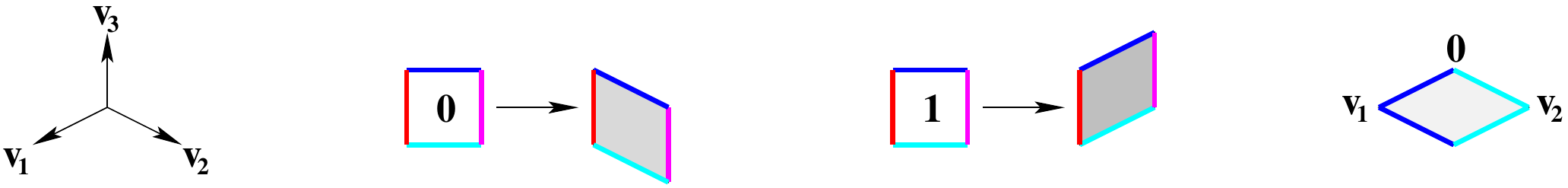}
\caption{The $\vec{v}_i$'s (left), the shearing of tiles in $\tau_\alpha$ (center) and the transfer tiles (right).}
\label{fig:square_to_rhombus}
\end{figure}

The idea behind the definition of $\tau'_\alpha$ is simple.
Let us call {\em $\vec{v}_i$-ribbon} of a $3\to 2$ tiling a maximal sequence of tiles, with two consecutive tiles being adjacent along an edge $\vec{v}_i$.
Then, one easily sees that $\tau'_\alpha$ forms the $3\to 2$ tilings whose $\vec{v}_3$-ribbons embed configurations in $Z'_\alpha$, with the transfer tiles just carrying colors between ribbons (see Fig.~\ref{fig:z_embedding}).

\begin{figure}[hbtp]
\includegraphics[width=\textwidth]{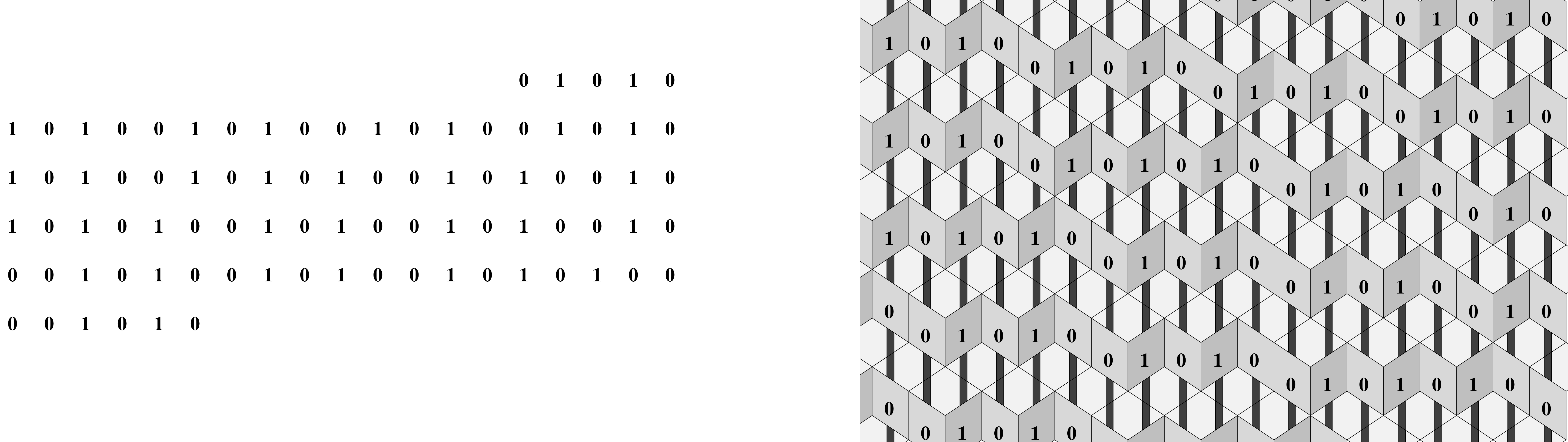}
\caption{A part of an element in $Z'_\alpha$ (left) and its embedding into the $\vec{v}_3$-ribbons of a $3\to 2$ tiling (right, with thick lines drawn on ``transfer'' tiles indicating the way tile decorations travel between neighbor $\vec{v}_3$-ribbons.)}
\label{fig:z_embedding}
\end{figure}

By proceeding similarly up to a circular permutation on the $\vec{v}_i$'s, one derives from $\tau_\beta$ a tile set $\tau'_\beta$ which forms the $3\to 2$ tilings whose $\vec{v}_2$-ribbons embed configurations in $Z'_\beta$.
Let us finally define the tile set $\tau'_{\alpha,\beta}$ as the cartesian product of $\tau'_\alpha$ and $\tau'_\beta$, that is, to each pair of identically shaped tiles in $\tau'_\alpha\times \tau'_\beta$ corresponds a tile in $\tau'_{\alpha,\beta}$ which has the same shape, with each edge having a color which encodes the colors of the pair of corresponding edges in $\tau'_\alpha$ and $\tau'_\beta$.
The tilings of $\tau'_{\alpha,\beta}$ are thus the $3\to 2$ tilings which embed $Z'_\alpha$ on their $\vec{v}_3$-ribbons and $Z'_\beta$ on their $\vec{v}_2$-ribbons.
This allows only planar $3\to 2$ tilings of slope $(1,\alpha,\beta)$ and width at most $4$ (this can be seen by decomposing any path between two points in a $\vec{v}_3$-ribbon followed by a $\vec{v}_2$-ribbon).
This moreover allows at least the planar $3\to 2$ tilings of slope $(1,\alpha,\beta)$ and width $1$.
We thus (constructively) proved:

\begin{proposition}\label{prop:3_vers_2}
Any computable planar $3\to 2$ tiling admits local rules.
\end{proposition}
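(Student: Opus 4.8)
The plan is to assemble the pieces built above into a single finite tile set realising the given tiling, and then to verify the three requirements in the definition of local rules: finiteness of the tile set, non-emptiness of the set of admitted tilings, and the fact that every admitted tiling is planar of the prescribed slope and of uniformly bounded width. First I would fix the data. The slope $V$ is a plane of $\mathbb{R}^3$; being computable, so is its normal, which we normalise as $(1,\alpha,\beta)$, whence $\alpha$ and $\beta$ are computable reals. Proposition~\ref{prop:sofic_z'} then makes $Z'_\alpha$ and $Z'_\beta$ sofic and, via the dictionary between sofic subshifts and colored tilings recalled in Section~\ref{sec:subshift}, yields the finite edge-colored tile sets $\tau_\alpha$ and $\tau_\beta$ whose labelled tilings are exactly $Z'_\alpha$ and $Z'_\beta$. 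From these I form $\tau'_\alpha$, $\tau'_\beta$ and their product $\tau'_{\alpha,\beta}$ exactly as above. Finiteness is immediate, since the product of two finite tile sets over a finite color alphabet is finite, and the matching constraints are local by construction, as they only compare colors of adjacent edges and labels of two stacked tiles of a ribbon.

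Next I would settle non-emptiness by exhibiting admitted tilings. Any genuine planar $3\to 2$ tiling of slope $(1,\alpha,\beta)$ and width $1$ reads, along each $\vec{v}_3$-ribbon, a Sturmian word of slope $\alpha$ (a row of a configuration of $Z_\alpha\subset Z'_\alpha$, by Proposition~\ref{prop:parallel_sturmian}), and along each $\vec{v}_2$-ribbon a Sturmian word of slope $\beta$. These words can therefore be decorated with the colors of $\tilde{Z}_\alpha$, of $\tilde{Z}_\beta$ and with the transfer colors so as to produce an element of $\tau'_{\alpha,\beta}$. Hence the admitted set contains every width-$1$ planar tiling of slope $V$, and is in particular non-empty.

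The hard part will be the converse inclusion: showing that \emph{every} tiling admitted by $\tau'_{\alpha,\beta}$ is planar of slope exactly $(1,\alpha,\beta)$ and of uniformly bounded width. Here I would work in the lift. By construction the $\vec{v}_3$-ribbons carry rows of $Z'_\alpha$, so each such row is at distance at most $1$ from a Sturmian word of slope $\alpha$; read in the $(\vec{e}_1,\vec{e}_2)$-coordinates, this forces the staircase of a $\vec{v}_3$-ribbon to stay within a bounded slab around a line of slope $\alpha$. Symmetrically, the $\vec{v}_2$-ribbons bound the $(\vec{e}_1,\vec{e}_3)$-deviation through $\beta$. To convert these one-dimensional balance bounds into a genuine width bound I would connect any two vertices of the lift by a path running first inside a single $\vec{v}_3$-ribbon and then inside a single $\vec{v}_2$-ribbon: the distance-one bound limits the deviation from $V$ accumulated on each leg, and summing the two legs yields the uniform bound $w\le 4$ announced above, placing the entire lift in a slab $V+[0,4)^3$. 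Since the lift then lies at bounded distance from the plane of normal $(1,\alpha,\beta)$, that plane is, by uniqueness, the slope $V$.

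Combining finiteness, non-emptiness and this slope-and-width control gives precisely that the tiling admits local rules, which is what we want. The genuinely delicate point to get right is the last geometric translation, from the combinatorial balance of quasi-Sturmian rows to a Euclidean width of the slab, since it is what simultaneously pins down both the correct slope and its bounded width; the rest of the argument is the bookkeeping of the already-described shearing and product constructions.
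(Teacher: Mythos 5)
Your proposal is correct and takes essentially the same route as the paper: it invokes Proposition~\ref{prop:sofic_z'} to get the sofic subshifts $Z'_\alpha$ and $Z'_\beta$, builds the sheared tile sets $\tau'_\alpha$, $\tau'_\beta$ with transfer tiles and their cartesian product $\tau'_{\alpha,\beta}$, establishes non-emptiness by decorating the width-$1$ planar tilings of slope $(1,\alpha,\beta)$, and derives the uniform width bound $w\le 4$ by the same decomposition of a path into a $\vec{v}_3$-ribbon leg followed by a $\vec{v}_2$-ribbon leg. This is precisely the paper's (constructive) proof, with the verification steps spelled out slightly more explicitly.
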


\section{Higher dimensions and codimensions}
\label{sec:n_vers_d}

The last step to prove Th.~\ref{th:main} is to extend Prop.~\ref{prop:3_vers_2} to $n\to d$ tilings.
Although technical, this last step requires no new ideas.\\

For higher codimensions, we proceed by induction.
Our induction hypothesis is that any effective planar $n\to 2$ tiling admits weak local rules.
This holds for $n=3$ according to the previous section.
Let now $\mathcal{T}$ be an effective planar $(n+1)\to 2$ tiling.
For any basis vector $\vec{e}_i$, we project the lift of $\mathcal{T}$ along $\vec{e}_i$ to get the lift of an effective planar $n\to 2$ tiling, say $\mathcal{T}_i$.
By assumption, $\mathcal{T}_i$ admits local rules: let $\tau_i$ be a tile set whose tilings are at distance at most $w$ from $\mathcal{T}_i$.
We complete $\tau_i$ by adding the tiles with a $\vec{v}_i$-edge (that is, the tiles which disappeared from $\mathcal{T}$ by projecting along $\vec{e}_i$), with each of these tiles having no decoration on its $\vec{v}_i$-edges, and on the other edges a unique decoration that could be any of those appearing on an edge of a tile in $\tau_i$.
These new tiles thus just transfer decorations between the tiles of $\mathcal{T}_i$ (see Fig.~\ref{fig:higher_codim} for $n+1=4$).
Last, we define the tile set $\tau$ as the cartesian product of all the $\tau_i$'s (as we did for $\tau_\alpha$ and $\tau_\beta$ in the previous section).
This allows only $n+1\to 2$ tilings at distance at most $w'$ from $\mathcal{T}$ - in particular $\mathcal{T}$ itself.
This shows that $\mathcal{T}$ admits local rules.

\begin{figure}[hbtp]
\centering
\includegraphics[width=\textwidth]{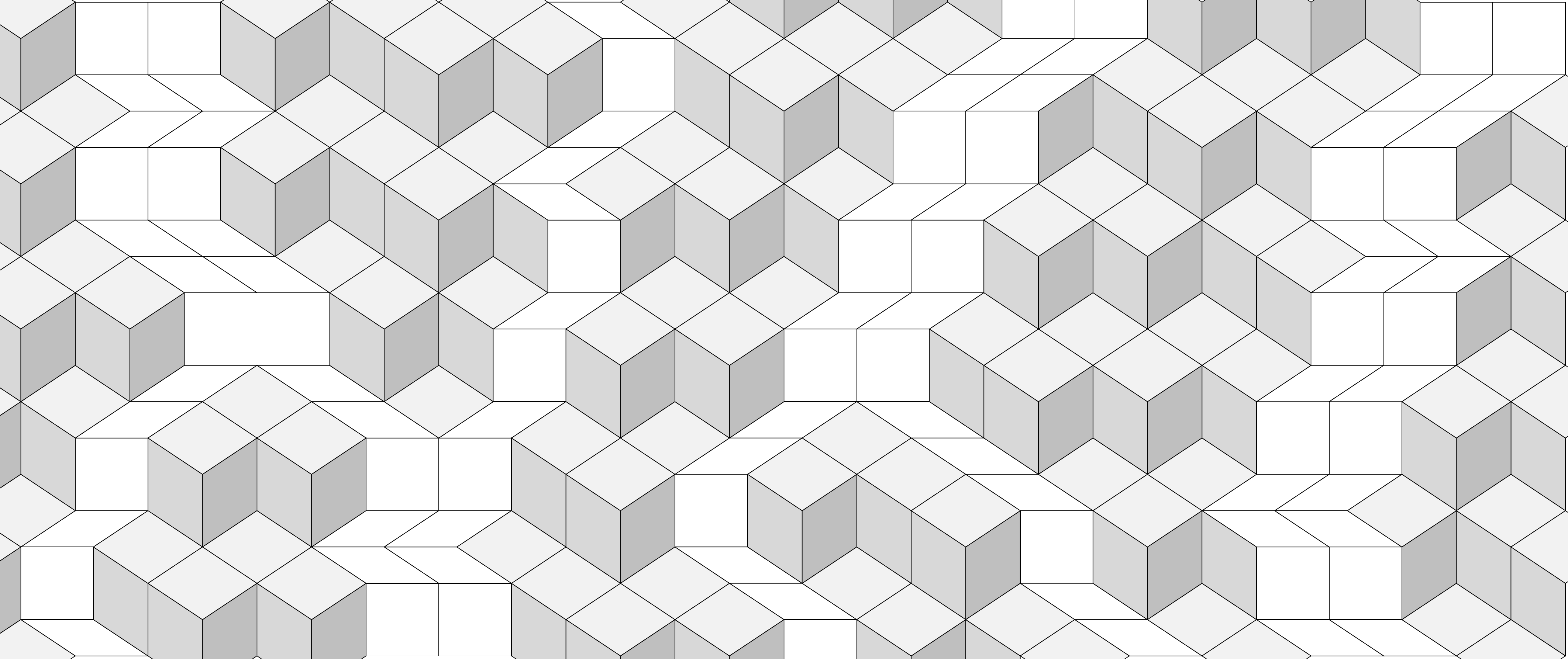}
\caption{
A $4\to 2$ tiling by shaded $\tau_i$ tiles and white additional tiles, with $\vec{v}_i$ being here horizontal.
The white tiles simply transfer horizontally the decorations of the shaded tiles.
By contracting each $\vec{v}_i$-edges to a point, the white ribbons disappear: we get a $3\to 2$ tiling by $\tau_i$, at distance at most $w'$ from $\mathcal{T}_i$.}
\label{fig:higher_codim}
\end{figure}

For higher dimensions, we also proceed by induction.
Our induction hypothesis is, for a fixed $n$, that any effective planar $n\to d$ tiling, $d<n$, admits local rules.
This holds for $d=2$ according to the above paragraph.
Let now $\mathcal{T}$ be an effective planar $n\to (d+1)$ tiling, with $d+1<n$.
Fix $i\in\{1,\ldots,n\}$.
For two tiles $T$ and $T'$ of $\mathcal{T}$, write $T\sim T'$ if these tiles share a $\vec{v}_i$-edge and let $\simeq$ be the transitive closure of the relation $\sim$.
Denote by $(\mathcal{T}_k)_{k\in\mathbb{Z}}$ the equivalence classes of $\simeq$, such that, for any $k$, $\mathcal{T}_k$ and $\mathcal{T}_{k+1}$ can be connected by a path which does not cross any other equivalence class (the $\mathcal{T}_k$'s play the role of $\vec{v}_i$-ribbons in the previous section).
By contracting all the $\vec{v}_i$-edges of a $\mathcal{T}_k$ ({\em flattening}), one gets a planar $n\to d$ tiling.
Its slope moreover depends only on the slope of $\mathcal{T}$, and in particular it is effective.
This allows to see $\mathcal{T}$ as a sequence of ``stacked'' parallel effective planar $n\to d$ tilings (namely the flattened $\mathcal{T}_k$'s), with the remaining tiles containing no $\vec{v}_i$-edge.
By induction, there exists a finite tile set $\tau_i$ whose tilings are at bounded distance $w$ from any of the flattened $\mathcal{T}_k$'s (since they are all parallel).
It is straighforward to ``unflatten'' $\tau_i$ to get a tile set $\tilde{\tau}_i$ whose tilings are at bounded distance $w$ from any of the $\mathcal{T}_k$'s.
We complete $\tilde{\tau}_i$ by adding the tiles without $\vec{v}_i$-edge (that is, the tiles lying between the stacked $\mathcal{T}_k$'s), with decorations being just transferred between consecutive $\mathcal{T}_k$'s along the direction $\vec{v}_i$ (as done in the previous section to transfer decorations between consecutive $\vec{v}_i$-ribbons).
The last step is (as in the previous section again) to define the cartesian product $\tau$ of the tile sets $\tilde{\tau}_i$, $i=1,\ldots,d$: its tilings are those at bounded distance $w$ from $\mathcal{T}$ - in particular $\mathcal{T}$ itself.
This shows that $\mathcal{T}$ admits local rules.

\thebibliography{bla}
\bibitem{berger} R. Berger, {\em The undecidability of the domino problem}, Mem. Amer. Math. Soc.{\bf 66}, pp. 1--72 (1966).
\bibitem{burkov} S. E. Burkov, {\em Absence of weak local rules for the planar quasicrystalline tiling with the 8-fold rotational symmetry}, Comm. Math. Phys. {\bf 119}, pp. 667--675 (1988).
\bibitem{AS} N. Aubrun, M. Sablik, {\em Simulation of recursively enumerable subshifts by two dimensional SFT}, to appear in Acta Applicandae Mathematicae. 
\bibitem{DRS} B.~Durand, A.~Romashchenko, and A.~Shen, {\em Effective closed subshifts in 1D can be implemented in 2D}, in Fields of Logic and Computation, {\em Lecture Notes in Computer Science} {\bf 6300}, pp. 208--226, Springer, 2010.
\bibitem{le1} T. T. Q. Le, S. Piunikhin, V. Sadov, {\em Local rules for quasiperiodic tilings of quadratic 2-Planes in $\mathbb{R}^4$}, Comm. Math. Phys. {\bf 150}, pp. 23--44 (1992).
\bibitem{le2} T. T. Q. Le,, {\em Local Rules for Pentagonal Quasi-Crystals}, Disc. {\&} Comput. Geom. {\bf 14}, pp. 31--70 (1995).
\bibitem{le3} T. T. Q. Le, {\em Local rules for quasiperiodic tilings}, pp. 331--366 in {\em The mathematics of long-range aperiodic order}, NATO Adv. Sci. Inst. Ser. C. 1995.
\bibitem{le4} T. T. Q. Le, S. Piunikhin, {\em Local rules for multi-dimensional quasicrystals}, Diff. Geom. and its Appl. {\bf 5}, pp. 10--31 (1995).
\bibitem{levitov} L. S. Levitov, {\em Local rules for quasicrystals}, Comm. Math. Phys. {\bf 119}, pp. 627--666 (1988).
\bibitem{lothaire} M. Lothaire, {\em Sturmian Words}, in Algebraic Combinatorics on Words, Cambridge UK: Cambridge University Press, 2002.
\bibitem{shechtman} D. Shechtman, I. Blech, D. Gratias, J. W. Cahn, {\em Metallic phase with long-range orientational symmetry and no translational symmetry}, Phys. Rev. Let. {\bf 53}, pp. 1951--1953 (1984).
\bibitem{socolar1} J. E. S. Socolar, {\em Weak matching rules for quasicrystals}, Comm. Math. Phys. {\bf 129}, pp. 599--619 (1990).
\bibitem{socolar2} J. E. S. Socolar, {\em Simple octogonal and dodecagonal quasicrystals}, Phys. Rev. B {\bf 39}, pp. 10519--10551 (1989).
\bibitem{wang} H. Wang, {\em Proving theorems by pattern recognition II}, Bell Systems Tech. J. {\bf 40}, pp. 1--41 (1961).

\end{document}